%
%
%
\documentclass[twocolumn]{article}
\usepackage{amsmath,amssymb,amsthm}
\usepackage{times}
\usepackage[latin1]{inputenc}
\usepackage[T1]{fontenc}
\usepackage{latex8}
\usepackage{graphicx}
\usepackage{verbatim}
\usepackage{makeidx}
\usepackage{url}
\pagestyle{empty}

\newtheorem{example}{Example}
\newtheorem{theorem}{Theorem}
\newtheorem{lemma}{Lemma}
\newtheorem{remark}{Remark}

\begin{document}
\title{PolyAdd: Polynomial Formal Verification of Adder Circuits}
\author{
 {\centering{\begin{tabular}{c}
Rolf Drechsler \\
\\
Institute of Computer Science \\
University of Bremen \\
28359 Bremen, Germany  \\
\multicolumn{1}{c}{drechsler@uni-bremen.de}
\end{tabular}
 }} }
\maketitle
\thispagestyle{empty}
\begin{abstract}
Only by formal verification approaches functional correctness can be ensured. While for many circuits fast verification is possible, in other cases the approaches fail. In general no efficient algorithms can be given, since the underlying verification problem is NP-complete.

In this paper we prove that for different types of adder circuits polynomial verification can be ensured based on BDDs. While it is known that the output functions for addition are polynomially bounded, we show in the following that the entire construction process can be carried out in polynomial time. This is shown for the simple Ripple Carry Adder, but also for fast adders like the Conditional Sum Adder and the Carry Look Ahead Adder. Properties about the adder function are proven and the core principle of polynomial verification is described that can also be extended to other classes of functions and circuit realizations. 
\end{abstract}
\section{Introduction}

Ensuring the functional correctness of circuits and systems is one of the major challenges in today's circuit and system design. While simulation and emulation approaches reach their limits due to the complexity of the system under verification according to Moore's Law, only formal proof techniques can ensure correctness according to the specification (see e.g.~\cite{Dre:2004,Dre:2018}). In these approaches proof engines, like BDD, SAT or SMT, are applied. 

In practice these techniques work often well and can handle circuits of several million gates. But it might also happen that the proof fails due to run time or memory constraints. One of the major difficulties is that this can hardly be predicted resulting in non-robust behavior of the tools. For this, a deeper understanding is required which circuits can be handled efficiently and for which ones the formal approach will fail. 

In the context of the highly relevant class of arithmetic circuits early studies on BDDs have shown that they are not well-suited to verify multipliers \cite{Bry:91}, but using dedicated data structures, like *BMDs \cite{BC:95} it was possible to represent the output functions of a multiplier  polynomially. In \cite{KMB+:97} it has been shown that not only the outputs can be represented, but for a specific type of Wallace tree multiplier the complete verification can be carried out polynomially. 

In this paper, we consider circuits for addition of two binary numbers. While it is well known that the BDD size for the adder function is only linear in the bit size \cite{Bry:86}, we show that the complete construction process of the BDD is also bounded polynomially. This is shown for three different adder architectures, namely the {\em Ripple Carry Adder} (RCA), the {\em Conditional Sum Adder} (CoSA) and the {\em Carry Look Ahead Adder} (CLA). Theoretical bounds on the BDD sizes are proven and it is shown that the complete symbolic simulation starting from the inputs to the outputs of the circuit can be carried out polynomially. Furthermore, for specific functions upper bounds on the BDD size are proven. 

The paper is structured as follows: In Section \ref{se:notdef} notations and definitions are reviewed to make the paper self-contained. The adder function and BDDs are introduced. For the three adders the circuit realization is reviewed in Section \ref{se:circ_real}. in Section \ref{se:poly_ver} for the three adder architectures it is proven that formal verification can be done efficiently. Finally, the results are summarized and open problems are addressed.

\section{Notation and Definition}\label{se:notdef}

Let $f: {\bf B}^n \rightarrow {\bf B}$ be a Boolean function over variable set $X_n = \{x_1, \ldots, x_n \}$. 

\begin{figure}[t]
\begin{center}
\includegraphics[scale=0.9]{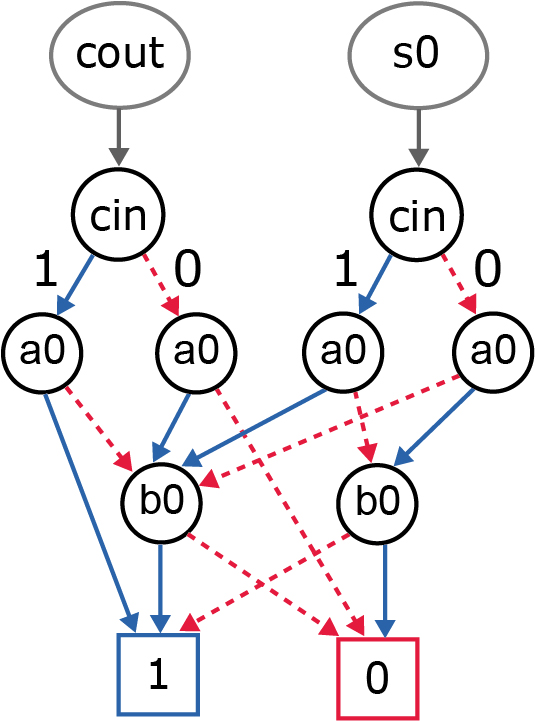}
\end{center}
\caption{BDD for full adder}\label{fi:fa_bdd}       
\end{figure}

\subsection{Adder Function}\label{sub:sym}
\vspace{-0.15cm}

Let $a$, $b$ and $s$ be three binary numbers of $n$ bits, 
where $s$ is the sum of $a$, $b$ and an incoming carry bit $c_{-1}$. The relation between the sum $s$ and the operands $a$ and $b$ can be described by the following two equations:
\begin{eqnarray}
\forall_{i=0}^{n-1} c_i = a_ib_i  + a_ic_{i-1} + b_ic_{i-1} \\ \nonumber 
 \\
\forall_{i=0}^{n-1} s_i = a_i \oplus b_i \oplus c_{i-1} \label{eq:sum}
\end{eqnarray}
The variable $c_i$ is called the $i$-th carry bit. The core cells of many adder architectures are the {\em Half Adder} (HA) and {\em Full Adder} (FA) cells realizing a 1-bit addition without or with carry input, respectively. The function table of the HA is shown in following table:
\begin{center}
\begin{tabular}{ cc|cc }
  $a_i$ & $b_i$ & $ha_1$ & $ha_0$ \\ \hline
     0     &    0    &      0     &      0    \\
     0     &    1    &      0     &      1    \\
     1     &    0    &      0     &      1    \\
     1     &    1    &      1     &      0     
 \end{tabular}
\end{center}
It is easy to see that the function $ha_1$ can be realized by an AND-gate, while $ha_0$ is described by an $\oplus$-gate, i.e.:
$$ ha_1 = a_i \cdot b_i \hspace{1cm} ha_0 = a_i \oplus b_i$$
For the FA with inputs $a_i$, $b_i$ and $c_{i-1}$ it holds:
$$ fa_1= a_i \cdot b_i + c_{i-1} \cdot (a_i + b_i) \hspace{1cm} fa_0 = a_i \oplus b_i \oplus c_{i-1}$$

\subsection{Binary Decision Diagrams}
\vspace{-0.15cm}
Reduced ordered {\em Binary Decision Diagrams} (BDDs) \cite{Bry:86,DB:98b} are {\em Directed Acyclic Graphs} (DAGs) where a Shannon decomposition
$$f=\overline{x}_if_{\overline{x}_i}+x_if_{x_i} (1 \leq i \leq n)$$ is carried out in each node. 

\begin{example}
The BDD for the FA is shown in Figure \ref{fi:fa_bdd}.
\end{example}

An important property of BDDs is that the synthesis operations, like AND, OR or composition, can be carried out in polynomial time and space. This can be described by the operator {\em if-then-else} (ite) \cite{Bry:86,BRB:90}\footnote{Notice that in the following for the discussion and the proofs BDDs without complemented edges are considered.}. A sketch of the algorithm is as follows, where {\em Rh} and {\em Rl} denote the high- and low-successors, respectively, and e.g.~{\em F1i} is the cofactor to $1$ with respect to variable $i$:

\begin{verbatim}
ite(F,G,H) {
 if (terminal case OR 
     (F,G,H) in computed-table) { 
  return result; 
} else {
 let xi be the top variable of (F,G,H);
 Rh = ite(F1i,G1i,H1i);
 Rl = ite(F0i,G0i,H0i);
 if (Rh = Rl) return Rh; 
 R = find_or_add_unique_table(v,Rl,Rh);
 insert_computed_table(F,G,H,R);
 return R;
 } 
}
\end{verbatim}
 The {\em ite}-operator has a polynomial worst case behavior, i.e.~for graphs $F$, $G$ and $H$ the result is bounded by $O(|F| \cdot |G| \cdot |H|)$. This bound holds under the assumption of an optimal hashing in $O(1)$. But also in the case of a worst case behavior of the hashing function, {\em ite} remains polynomial (see \cite{DS:2001b}).

\subsection{Symbolic Simulation}

To build the BDDs for the output signals of a circuit, the circuit is traversed in a topological order starting from the inputs. For the inputs signals the corresponding BDDs are initially generated. Then, for each gate in the circuit the corresponding synthesis operation based on {\em ite} is carried out. This process is called {\em symbolic simulation} in the following. 

\begin{example}
The symbolic simulation for a circuit consisting of a single AND gate is shown in Figure \ref{fi:symbsym}.
\end{example}

\begin{figure}[t]
\begin{center}
\includegraphics[scale=0.9]{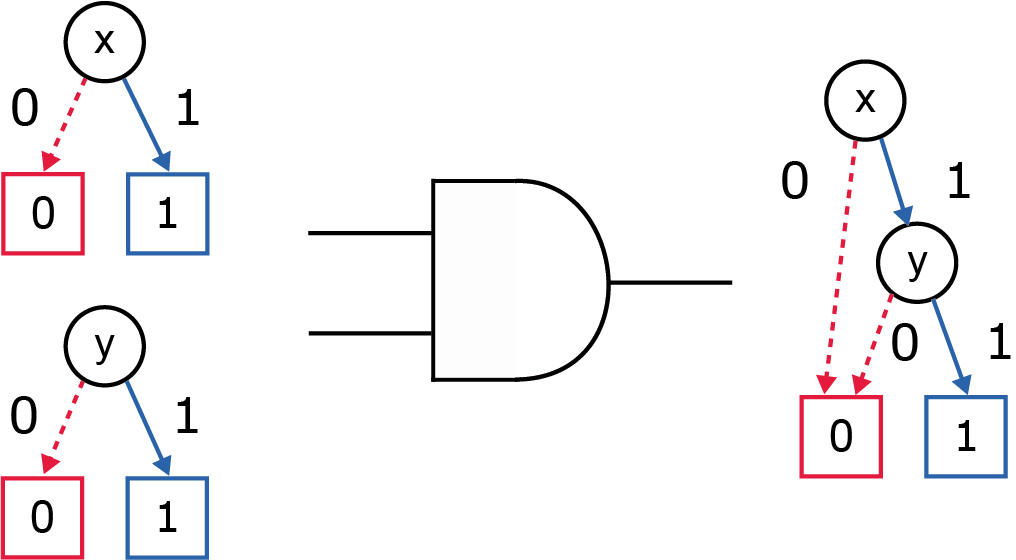}
\end{center}
\caption{Symbolic simulation for AND gate}\label{fi:symbsym}       
\end{figure}

\section{Circuit Realization}\label{se:circ_real}

In this section different realizations for adder circuits are briefly reviewed. Only the basic principles are reviewed as far as it is needed for making the paper self-contained. For more details see \cite{BDM:2005}.

\subsection{Ripple Carry Adder}

The {\em Ripple Carry Adder} (RCA) simply consists of a sequence on $n$ full adders. The cells are connected via the carry chain (see Figure \ref{fi:rca}). 
\begin{figure}[t]
\begin{center}
\includegraphics[scale=0.35]{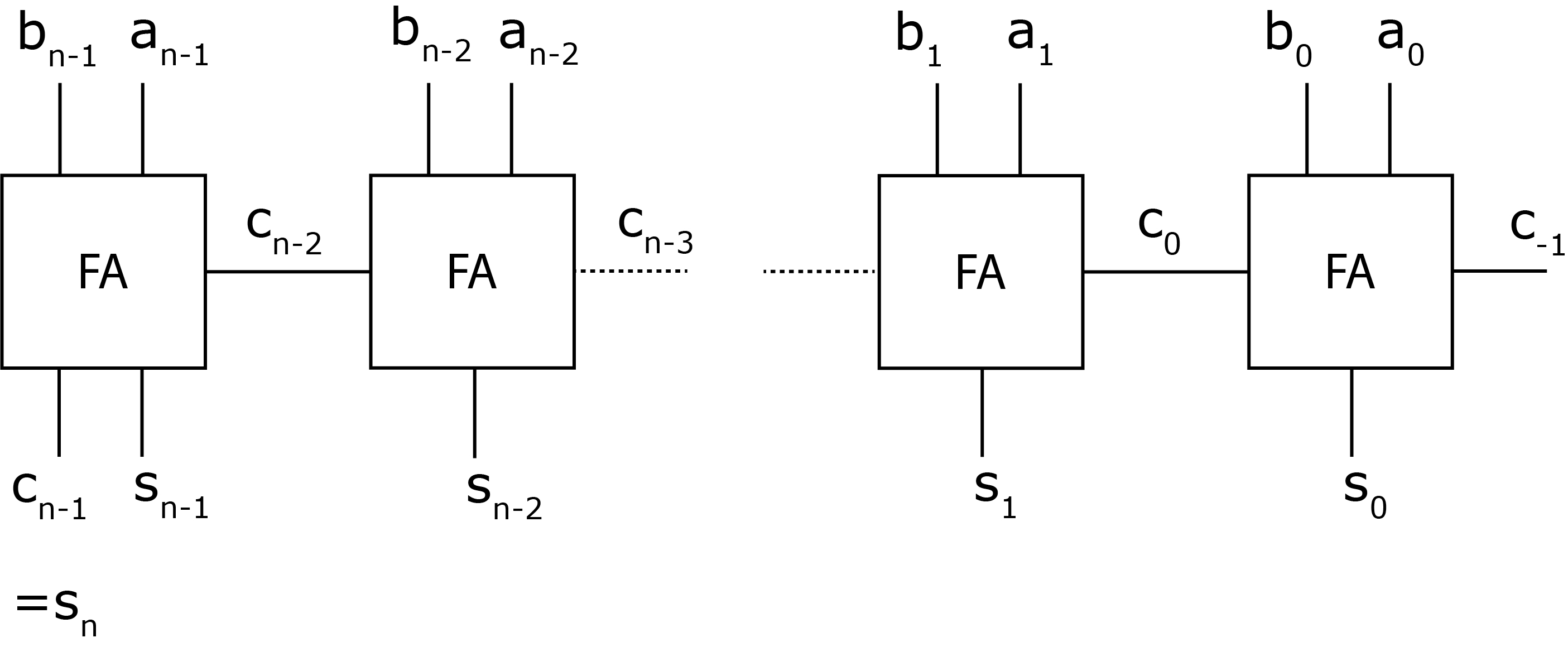}
\end{center}
\caption{Ripple Carry Adder}\label{fi:rca}       
\end{figure}

The RCA is very area efficient, since it only requires a linear number of gates. But the RCA is also very slow, since the delay -- measured in the number of gates that has to be traversed -- is also linear in the number of inputs. 

\subsection{Conditional Sum Adder}

The {\em Conditional Sum Adder} (CoSA) can be recursively described. While the lower $n/2$ bits are computed by a CoSA of bit-width $n/2$, for the higher $n/2$ bits the result is computed by two CoSAs in parallel, where one assumes an incoming carry, while the other does not. Thus, the adder makes use of the fact that the higher bits only depend on the incoming carry from the lower half. Both results are pre-computed and the correct result is selected by a multiplexer stage. The computation scheme is shown in Figure \ref{fi:cosa}. For the 1-bit adders, simply full adders can be used. 

\begin{figure}[t]
\begin{center}
\includegraphics[scale=0.375]{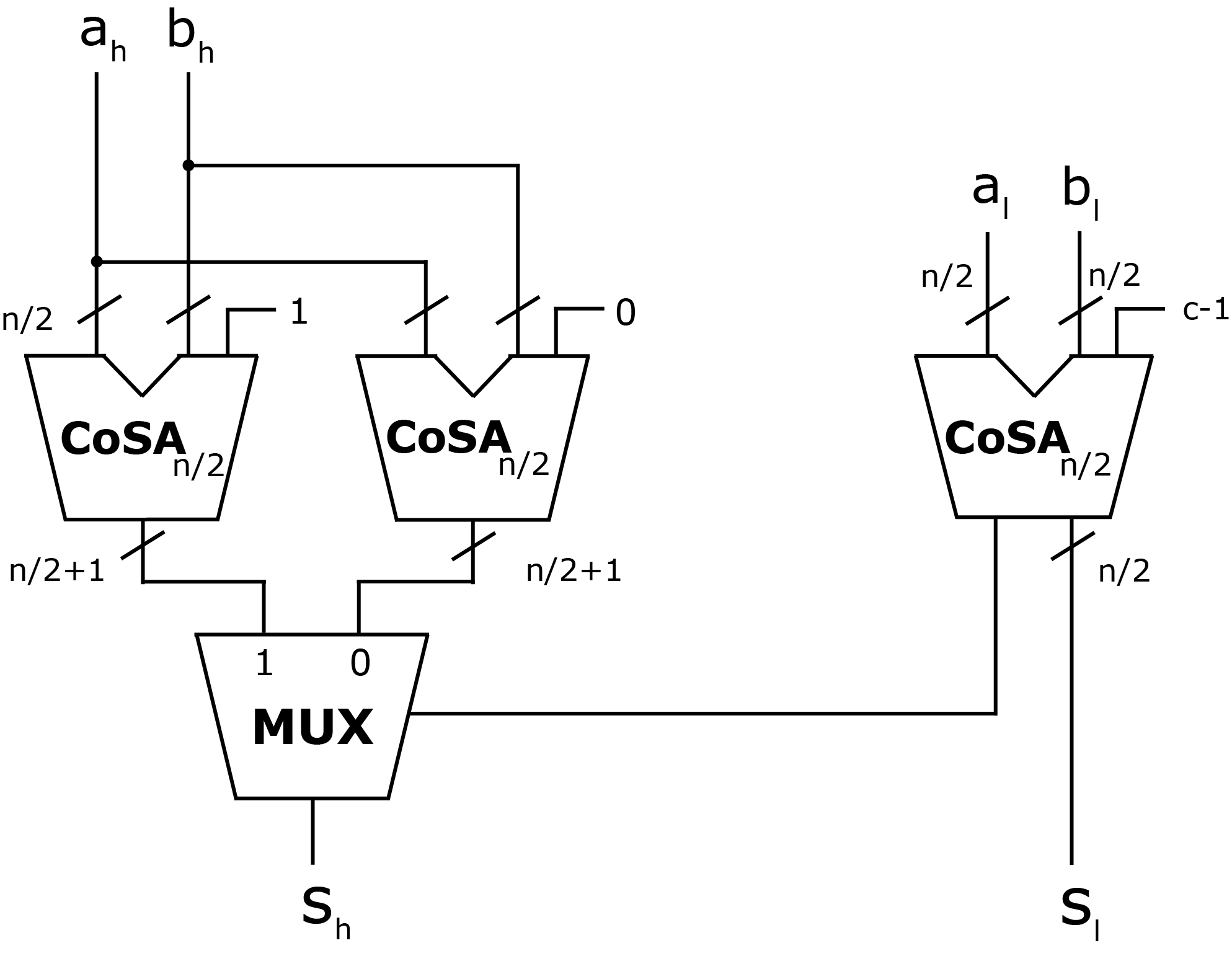}
\end{center}
\caption{Conditional Sum Adder}\label{fi:cosa}       
\end{figure}

The CoSA is a fast adder, i.e.~it has a depth of $O(log(n))$. The circuit has a gate count of $O(n \cdot log(n))$.

\subsection{Carry Look Ahead Adder}

The {\em Carry Look Ahead Adder} (CLA) makes use of a fast prefix computation in a block $P_n$ (see Figure \ref{fi:cla}). From Equation (\ref{eq:sum}) it is obvious that it is sufficient to compute the carry bits $c_i$ for all $i$. This can be done based on parallel prefix computation of the generation and propagation properties for addition. These are described by function $g$ and $p$, respectively:
\begin{enumerate}
\item For $0 \leq i <n$: $p_{i,i}= a_i \oplus b_i$, $g_{i,i} = a_i \cdot b_i$
\item For $i \leq k < j$: $p_{j,i} = p_{k,i} \cdot p_{j,k+1}$, \\ $g_{j,i} = g_{j, k+1} + (g_{k,i} \cdot p_{j,k+1})$, 
\end{enumerate}
This means that either a carry bit is generated in the upper part or a carry is generated in the lower part and is propagated through the higher part. 
Thus, the carry bits can be computed as ($0 \leq i < n$):
$$ c_i = g_{i,0} + p_{i,0} \cdot c_{-1} $$

\begin{figure}[t]
\begin{center}
\includegraphics[scale=0.375]{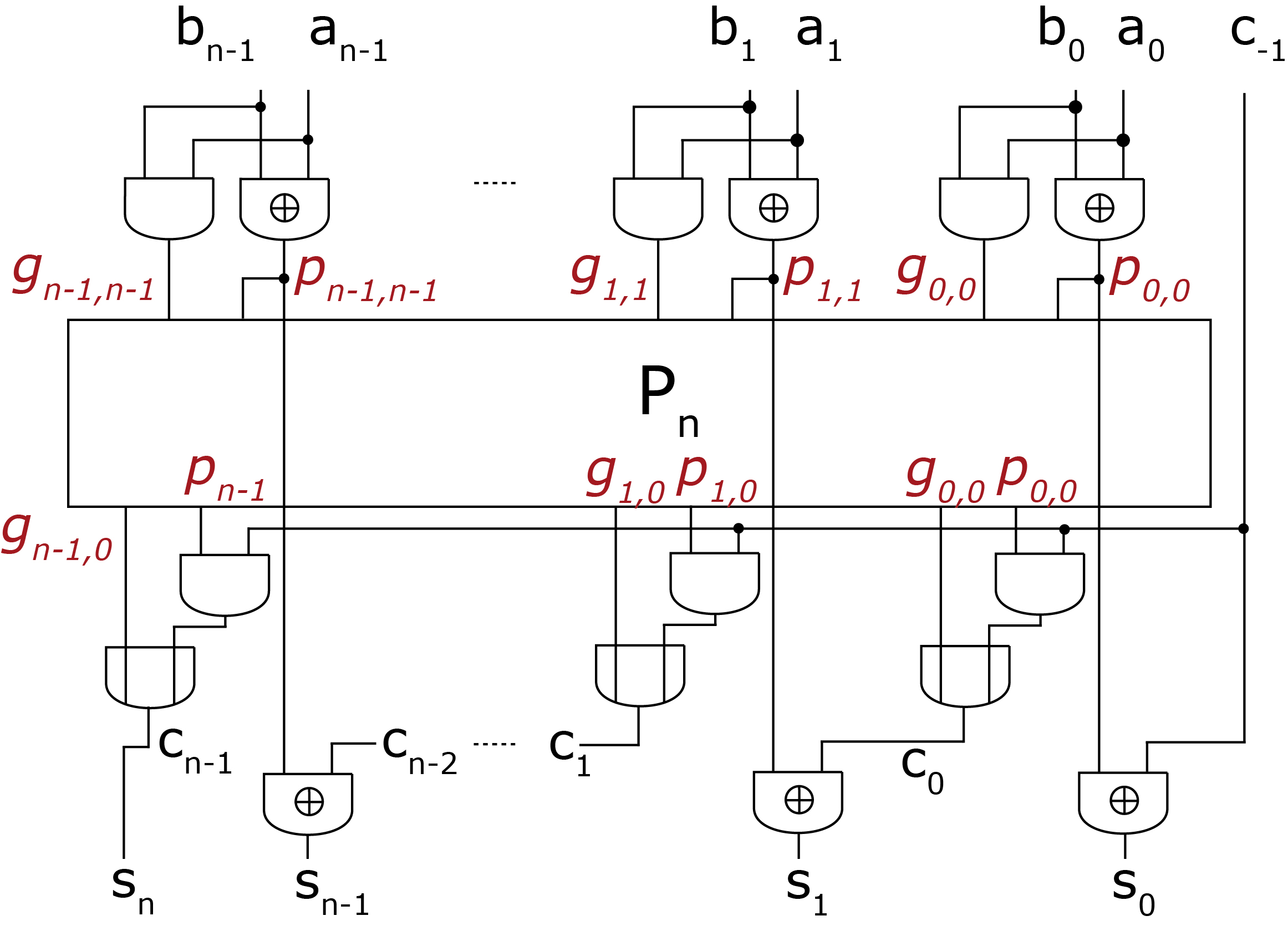}
\end{center}
\caption{Carry Look Ahead Adder}\label{fi:cla}       
\end{figure}

The CLA has a logarithmic depth and a size linear in the number of input variables. 

\begin{figure*}[t]
\begin{center}
\includegraphics[scale=0.6]{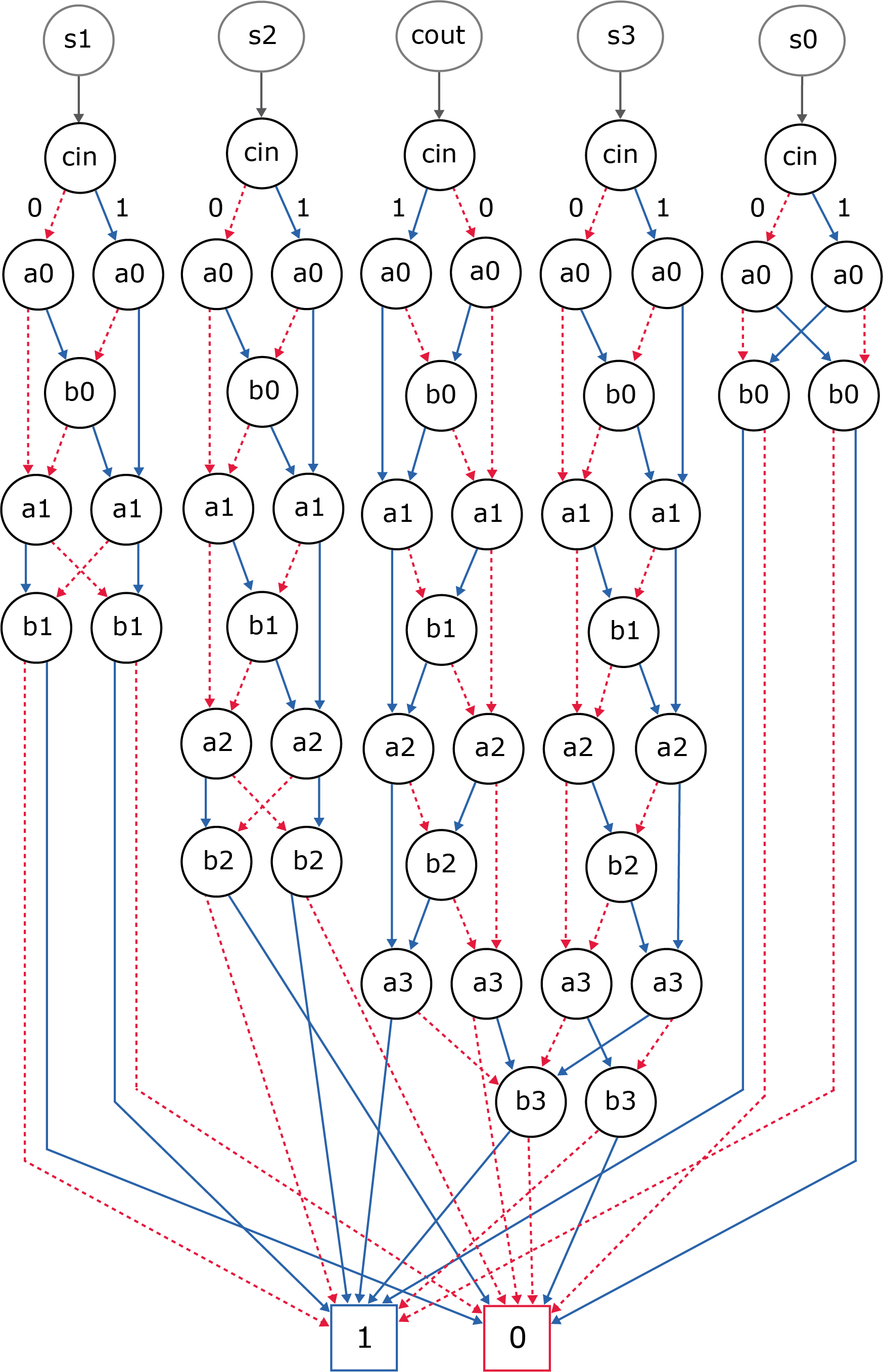}
\end{center}
\caption{BDD for 4-bit adder function}\label{fi:bdd_add}       
\end{figure*}

\section{Polynomial Verification}\label{se:poly_ver}

It is well known that the size of BDDs for the adder function is dependent on the variable ordering. It has also been proven that the BDD size is linearly bounded (see Section 4.4 in \cite{Weg:2000}), where exact estimates are given for BDD sizes. There, addition without the incoming carry bit has been considered. The results can be extended to also consider the incoming carry bit as it is required for all adder circuits in the following.
\begin{theorem}\label{th:adder_bdd}
\begin{enumerate}
\item The sum bit $s_i$ of an adder has the BDD size bounded by $3 \cdot i+7$.
\item The carry bit $c_i$ of an adder has the BDD size bounded by $3 \cdot i+6$.
\end{enumerate}
\end{theorem}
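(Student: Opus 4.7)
My approach is induction on $i$, using the interleaved variable ordering $c_{-1} \prec a_0 \prec b_0 \prec a_1 \prec b_1 \prec \cdots$ so that $a_i, b_i$ always sit at the bottom of the BDD being extended. The structural key is that the reduced BDD of $c_i$ has only two sinks, and that the identities $c_i = a_i b_i + (a_i+b_i)\,c_{i-1}$ and $s_i = a_i \oplus b_i \oplus c_{i-1}$ let one describe the BDDs of $c_i$ and $s_i$ as ``the BDD of $c_{i-1}$ with its two sinks rerouted into a small gadget on $a_i,b_i$''.

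For the base case $i=0$ I would draw the two BDDs and count directly. For $c_0 = a_0 b_0 + (a_0+b_0)\,c_{-1}$ one obtains the $c_{-1}$-root, two $a_0$-nodes (one for the cofactor $a_0 b_0$, one for $a_0+b_0$), one shared $b_0$-node with leaves $0$ and $1$, and the two terminals, totalling $6 = 3\cdot 0 + 6$. For $s_0 = a_0 \oplus b_0 \oplus c_{-1}$ one obtains the $c_{-1}$-root, two distinct $a_0$-nodes, two distinct $b_0$-nodes (with children $(0,1)$ and $(1,0)$), and the two terminals, totalling $7 = 3\cdot 0 + 7$.

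The inductive step writes $c_i$ and $s_i$ as Shannon cofactors with respect to the upper variables $c_{-1},a_0,\ldots,b_{i-1}$ and observes that these cofactors coincide with those of $c_{i-1}$, except that the two constants $0$ and $1$ are now replaced by functions of $a_i,b_i$: namely $a_i b_i$ and $a_i+b_i$ for $c_i$, and $a_i\oplus b_i$ and $\overline{a_i\oplus b_i}$ for $s_i$. A short case analysis of the four two-variable sub-BDDs shows that in the $c_i$ case the two replacement sub-BDDs share one $b_i$-node and the two terminals while contributing two fresh $a_i$-nodes, so exactly three new internal nodes appear and $|c_i| = |c_{i-1}| + 3$. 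In the $s_i$ case both $b_i$-nodes $(0,1)$ and $(1,0)$ are shared between the two replacement sub-BDDs while the two $a_i$-nodes differ, so four new internal nodes appear and $|s_i| = |c_{i-1}| + 4$. Substituting the inductive estimate $|c_{i-1}|\le 3(i-1)+6$ into these recurrences yields the claimed bounds $3i+6$ and $3i+7$.

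The step requiring the most care is verifying that ``rerouting the sinks'' of the $c_{i-1}$-BDD does not cause any extra reductions or mergers higher up that would break the tidy $+3$/$+4$ recurrence. The argument is that the four replacement functions $a_i b_i$, $a_i+b_i$, $a_i\oplus b_i$ and $\overline{a_i\oplus b_i}$ are pairwise distinct and non-constant in $a_i,b_i$, so two cofactors of $c_{i-1}$ that reached different sinks still represent different functions after substitution, and two that reached the same sink remain identical; hence reducedness of the BDD and the node count are preserved. This also makes the bounds tight, though only the stated inequalities are needed here.
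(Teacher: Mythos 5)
Your proof is correct, and the constants you obtain (with the two terminals counted) match the theorem exactly. Your route is somewhat different from the paper's: the paper does not argue from scratch but generalizes Lemma 4.4.2 of Wegener's book, which gives the bound $3i+5$ for the sum bit of an adder \emph{without} incoming carry under the interleaved ordering, and then observes that accommodating $c_{-1}$ costs two extra nodes (one $c_{-1}$ node and one additional $a_0$ node), while for the carry output one node is saved at the bottom level because $b_i$ need not be tested when $a_i$ and the incoming carry already generate. You instead give a self-contained induction: base counts $|c_0|=6$, $|s_0|=7$, and the recurrences $|c_i|=|c_{i-1}|+3$, $|s_i|=|c_{i-1}|+4$ obtained by viewing the BDD of $c_i$ (resp.\ $s_i$) as the BDD of $c_{i-1}$ with its $0$- and $1$-sinks rerouted into the small gadgets for $a_ib_i$ and $a_i+b_i$ (resp.\ $a_i\oplus b_i$ and its complement). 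Your explicit check that this substitution is injective on subfunctions, so no upper nodes merge or become redundant, is exactly the point the paper leaves implicit by citation; it also shows the bounds are attained with equality, which the paper does not claim. The trade-off is that your argument needs the two-variable case analysis of the gadgets and the reducedness-preservation step, whereas the paper's proof is shorter but rests on the external lemma and a figure.
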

\begin{proof}
We use the interleaved variable ordering from the least to the most significant bits. 
For the sum bits the results from Lemma 4.4.2 in \cite{Weg:2000} can be generalized, where an upper bound of $3 \cdot i+5$ has been proven for the adder function without an incoming carry bit. For the additional carry bit two more nodes are required, i.e.~one for the carry bit itself and one for the $a_0$ variable. 

The same argument holds for the carry bit, but here on the lowest level one node is saved, since in case of generation by $a_i$ and the incoming carry, $b_i$ does not have to be tested any more (see Figure  \ref{fi:bdd_add} for the case of 4 variables). 
\end{proof}

It is important to notice that these results were always related to the representation size of the output functions, but not for the entire construction process. 
\begin{remark}\label{re:no_exact_bounds}
In the following, detailed bounds are not provided, since the goal of this paper is to show that the construction process is polynomial. 
\end{remark}
Thus, it is sufficient to show that each individual step can be carried out in polynomial time and space. We make use of the following observation: 
\begin{remark}
If for each internal signal the size of the BDD representation and the number of gates in the circuit is polynomially bounded in the number of inputs $n$, the whole circuit can be formally verified in polynomial time due to the polynomially bounded synthesis operations on BDDs. 
\end{remark}
This method can be applied to general circuits, but is used for adders only in the following. For the adder circuits from Section \ref{se:circ_real} the upper bounds hold, that each circuit only has a number of gates polynomial in the number of inputs $n$.

\subsection{Ripple Carry Adder}\label{sub:rca}

For the RCA it is very simple to see that the complete construction is polynomially bounded. For the HA of the least significant bit and all FAs the BDD can be locally constructed and has only a constant size. Due to the structure of the RCA each carry output of a cell is connected to the carry input of the next cell. The substitution of the input variable can be carried out by the compose algorithm based on {\em ite} and has a polynomial worst-case complexity. Furthermore, according to Theorem \ref{th:adder_bdd} the size of the BDD for the carry signal for all $i$ is always linear. Thus, the whole construction process is polynomially bounded, since the composition only has to be carried out $n$ times. 
\begin{theorem}\label{th:rca}
The BDD for the RCA can be constructed polynomially.  
\end{theorem}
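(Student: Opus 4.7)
The plan is to process the carry chain from the least significant bit to the most significant one, maintaining the invariant that whenever a new full adder cell is about to be symbolically simulated, the BDDs for its three operands $a_i$, $b_i$ and $c_{i-1}$ all have size polynomial in $n$ under the interleaved ordering used in Theorem~\ref{th:adder_bdd}. At stage $0$ the invariant is trivial since all three operands ($a_0$, $b_0$, and $c_{-1}$) are single-variable BDDs; for $i>0$ it will be reestablished by Theorem~\ref{th:adder_bdd}, which directly gives $|c_i| \leq 3\cdot i+6$ and thus keeps the inductive hypothesis valid at linear cost.

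For the inductive step, I would simulate the $i$-th full adder gate by gate. Since the cell contains only a constant number of gates, a constant number of \emph{ite} invocations is required, each one costing $O(|F|\cdot|G|\cdot|H|)$ by the bound recalled in Section~\ref{se:notdef}. Every argument to these calls is either a primary-input BDD of size two or the result of a previous \emph{ite} inside the same cell, so iterating the multiplicative bound a constant number of times keeps all intermediate BDDs of size polynomial in $n$; in particular, this yields polynomial size for the outputs $s_i$ and $c_i$, and Theorem~\ref{th:adder_bdd} even tightens it to linear. The substitution of $c_{i-1}$ into the next cell is handled by the \emph{compose} operator, which is in turn reduced to a constant number of \emph{ite} steps and is absorbed into the same counting.

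Summing over the $n$ stages then gives the result: $n$ stages, each performing a constant number of synthesis operations on BDDs of size $O(n)$, yields an overall polynomial cost in $n$. The only delicate point I anticipate is the treatment of the \emph{internal} signals of a full adder cell, because Theorem~\ref{th:adder_bdd} only bounds the official outputs $s_i$ and $c_i$ and says nothing about the auxiliary BDDs produced en route (for instance the intermediate $a_i \oplus b_i$ or $c_{i-1} \cdot (a_i + b_i)$). This is resolved without a new combinatorial argument: each such intermediate is a Boolean function formed from operands whose BDDs are already polynomial, so the generic worst-case bound on \emph{ite} suffices to close the gap and confirm that the full construction is polynomial.
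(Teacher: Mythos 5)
Your argument is correct and follows essentially the same route as the paper: constant-size per-cell BDDs, composition of the carry chain via \emph{ite}/\emph{compose}, the linear bound on $c_i$ from Theorem~\ref{th:adder_bdd} to keep operands small, and a sum over the $n$ stages. Your explicit treatment of the internal signals of a full adder cell via the generic $O(|F|\cdot|G|\cdot|H|)$ bound is a slightly more careful bookkeeping of what the paper dispatches by noting the full adder BDD is locally of constant size, but it is not a different proof.
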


\subsection{Conditional Sum Adder}\label{sub:cosa}

The $n$ bit CoSA consists of three CoSAs of bit-size $n/2$ and a multiplexer stage. From Theorem \ref{th:adder_bdd} it follows that each of the connecting signals shown in Figure \ref{fi:cosa} can be represented by a BDD of linear size. Only the carry inputs have to be set to $0$ and $1$, respectively. The only operation that has to be carried out is the one corresponding to the MUX unit. But this can be described by {\em ite} and is polynomially bounded. Thus, we obtain: 
\begin{theorem}\label{th:cosa}
The BDD for the CoSA can be constructed polynomially.  
\end{theorem}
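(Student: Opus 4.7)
The plan is to apply the criterion stated just before Theorem \ref{th:rca}: it suffices to show that (i) the CoSA uses only polynomially many gates and (ii) every internal wire carries a function whose BDD has polynomial size under the interleaved least-to-most-significant variable ordering of Theorem \ref{th:adder_bdd}. Once both hold, each of the polynomially many symbolic simulation steps is an \emph{ite} invocation on BDDs of polynomial size, which by the $O(|F|\cdot|G|\cdot|H|)$ worst-case bound costs polynomial time and space; summing over all gates yields the theorem.

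Item (i) is immediate from Section \ref{se:circ_real}, where the CoSA is stated to have $O(n\cdot\log(n))$ gates.

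For (ii) I would unroll the recursive description in Figure \ref{fi:cosa} and sort the internal signals into three families. The first family consists of the outputs of the lower-half recursive $n/2$-bit CoSA; these are sum and carry bits of $a[0..n/2{-}1]+b[0..n/2{-}1]+c_{-1}$ and are linearly bounded by Theorem \ref{th:adder_bdd}. The second family consists of the outputs of the two upper-half recursive $n/2$-bit CoSAs, which compute sum and carry bits of $a[n/2..n{-}1]+b[n/2..n{-}1]+c$ for a constant $c\in\{0,1\}$; replacing the incoming carry variable by a constant can only shrink the BDD relative to the bound of Theorem \ref{th:adder_bdd}. The third family consists of the MUX outputs, which by construction realize the actual sum and carry bits of the full $n$-bit addition and are therefore again linearly bounded by Theorem \ref{th:adder_bdd}. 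Each MUX gate is realized by one \emph{ite} on three BDDs of size $O(n)$ and hence costs at most $O(n^3)$; multiplied by the $O(n\log n)$ gate count, the total construction work remains polynomial.

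The main obstacle I expect is the bookkeeping for (ii): one has to verify that the interleaved variable ordering under which Theorem \ref{th:adder_bdd} is proven is respected uniformly across all levels of the recursion, so that the linear bound applies not only to the sub-CoSA outputs in isolation but also to the mixed functions produced at the MUX stages that combine lower-half and upper-half variables. Once this ordering compatibility is pinned down, the remainder of the argument is a direct application of the \emph{ite} complexity bound together with the criterion recalled above.
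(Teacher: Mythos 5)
Your proposal matches the paper's own argument: the paper likewise notes that every connecting signal of the recursive CoSA structure is an adder sum or carry bit (with the two upper-half carry inputs fixed to $0$ and $1$) and hence linearly bounded by Theorem \ref{th:adder_bdd}, and that the only remaining operation is the MUX stage, realized by a polynomially bounded \emph{ite}. Your extra bookkeeping (the three signal families, the $O(n^3)$ per-MUX cost times the $O(n\log n)$ gate count, and the remark that the interleaved ordering restricts consistently to the sub-adders' variables) only makes explicit what the paper leaves implicit, so the approach is essentially identical and correct.
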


\begin{remark}
The results of Theorems \ref{th:rca} and \ref{th:cosa} can easily be generalized to further adder types that are based on full adders connected together using MUX cells, like e.g.~the {\em Carry Select Adder} in \cite{BDKR:96} with a run time of $O(\sqrt{n})$.
\end{remark}

\subsection{Carry Look Ahead Adder}

In the CLA the sum bits are computed by determining the carry bits first and finally EXOR-ing them with the corresponding $a_i$ and $b_i$ inputs according to Equation (\ref{eq:sum}). Thus, the core circuit computes the carry bits starting based on the property of generation and propagation, i.e.~functions $p$ and $g$. The union of propagation intervals is based on Boolean AND-operations, i.e.~larger interval only propagates a carry bit, if the left and the right part of the interval do so. For the generation part it holds that either the left part (using the higher bits) already propagates or the lower part generates, while the higher part propagates. In both cases, the structure consists of AND- and OR-operations only and it can be seen that the whole structure can be represented by BDDs of polynomial size. More formally, this can be proven as follows: 
\begin{lemma}
\begin{enumerate}
\item Function $p_{j,i}$ has the BDD size bounded by $3 \cdot (j-i+1)$ ($j>i$).
\item Function $g_{j,i}$ has the BDD size bounded by $3 \cdot (j-i) + 2$ ($j>i$).
\end{enumerate}
\end{lemma}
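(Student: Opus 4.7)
The plan is to fix the interleaved ordering $a_i, b_i, a_{i+1}, b_{i+1}, \ldots, a_j, b_j$ from least to most significant and bound, level by level, the number of distinct non-terminal cofactors that can arise. The total node count is then obtained by summing over the $2(j-i+1)$ variable levels.

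For part~1 I would first observe that any assignment to $a_i, b_i, \ldots, a_{k-1}, b_{k-1}$ either preserves the partial product $\prod_{k'=i}^{k-1}(a_{k'}\oplus b_{k'})=1$ or kills it to the constant $0$. Consequently the only non-terminal cofactor at the $a_k$-level is $\prod_{k'=k}^{j}(a_{k'}\oplus b_{k'})$, giving a single $a_k$-node; its cofactors at $a_k=0$ and $a_k=1$ force $b_k=1$ and $b_k=0$ respectively, contributing two distinct $b_k$-nodes. Summing $1+2=3$ nodes over the $j-i+1$ levels yields the claimed bound $3(j-i+1)$.

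For part~2 I would introduce the two residual functions I expect at level $a_k$: $A_k := g_{j,k}$, arising on paths from the root where no live generation has reached level $k$, and $B_k := p_{j,k}+g_{j,k}$, arising on paths where a generation from below is still being propagated. A direct case analysis on $(a_k,b_k)$ against the recursions for $p$ and $g$ shows the following transitions: $A_k|_{a_k=0}=A_{k+1}$ (independent of $b_k$, so the $b_k$-node is eliminated by reduction); $A_k|_{a_k=1}$ has $b_k$-cofactors $(A_{k+1},B_{k+1})$; $B_k|_{a_k=0}$ also has $b_k$-cofactors $(A_{k+1},B_{k+1})$, identical to the previous $b_k$-node and therefore merged by reduction; and $B_k|_{a_k=1}=B_{k+1}$ because $a_kb_k=1$ re-ignites a generation even when the propagation chain is broken, again eliminating its $b_k$-node. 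Each level $k\in\{i+1,\ldots,j\}$ therefore contributes $2+1=3$ nodes. Level $i$ carries only $A_i=g_{j,i}$ (no prior bits, so $B_i$ cannot occur), contributing $1+1=2$ nodes. The sum $2+3(j-i)$ matches the stated bound.

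The main obstacle is choosing the correct residual pair for $g_{j,i}$. A naive attempt such as \emph{no generation yet} versus \emph{need all remaining bits to propagate} is not closed under cofactoring, because the $a_k=b_k=1$ transition in the second state restarts a generation and keeps the function reachable to $1$ even after propagation fails; this would seem to force a third residual. Selecting $A_k=g_{j,k}$ and $B_k=p_{j,k}+g_{j,k}$ absorbs that transition back into $B_{k+1}$, and, as observed above, produces the isomorphic $b_k$-nodes whose merger yields exactly the tight count $3(j-i)+2$.
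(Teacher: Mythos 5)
Your proof is correct and follows essentially the same route as the paper: count nodes per variable pair $a_k,b_k$ under the interleaved least-to-most-significant ordering, getting $3$ per pair for $p_{j,i}$ and $3$ per pair with one node saved at the top level for $g_{j,i}$. Your explicit residual bookkeeping with $A_k=g_{j,k}$ and $B_k=p_{j,k}+g_{j,k}$, including the merged $b_k$-node shared by the two states, is a rigorous elaboration of the paper's terser claim that each pair contributes at most three nodes.
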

\begin{proof}
For function $p_{j,i}$ it holds: 
$$p_{j,i} = (a_j \oplus b_j)(a_{j-1} \oplus b_{j-1}) \ldots (a_i \oplus b_i)$$
The BDD for the EXOR of two variables has three nodes. Since each variable only appears once, the corresponding BDDs can simply be connected (see Figure \ref{fi:bdd_p} for the case of 4 variables).  

Since the BDD is a cannonical representation, in 
$$g_{j,i} = g_{j, k+1} + (g_{k,i} \cdot p_{j,k+1})$$ the choice of $k$ does not influence the BDD size and we choose $k=j-1$ resulting in $$g_{j,i} = g_{j, j} + (g_{j-1,i} \cdot p_{j,j}).$$ For each pair of variables  $a_l$, $b_l$ at most 3 nodes can be generated (see Figure \ref{fi:bdd_g} for the case of 4 variables). For the top variables even one more node is saved. 
\end{proof}
\begin{figure}[t]
\begin{center}
\includegraphics[scale=0.64]{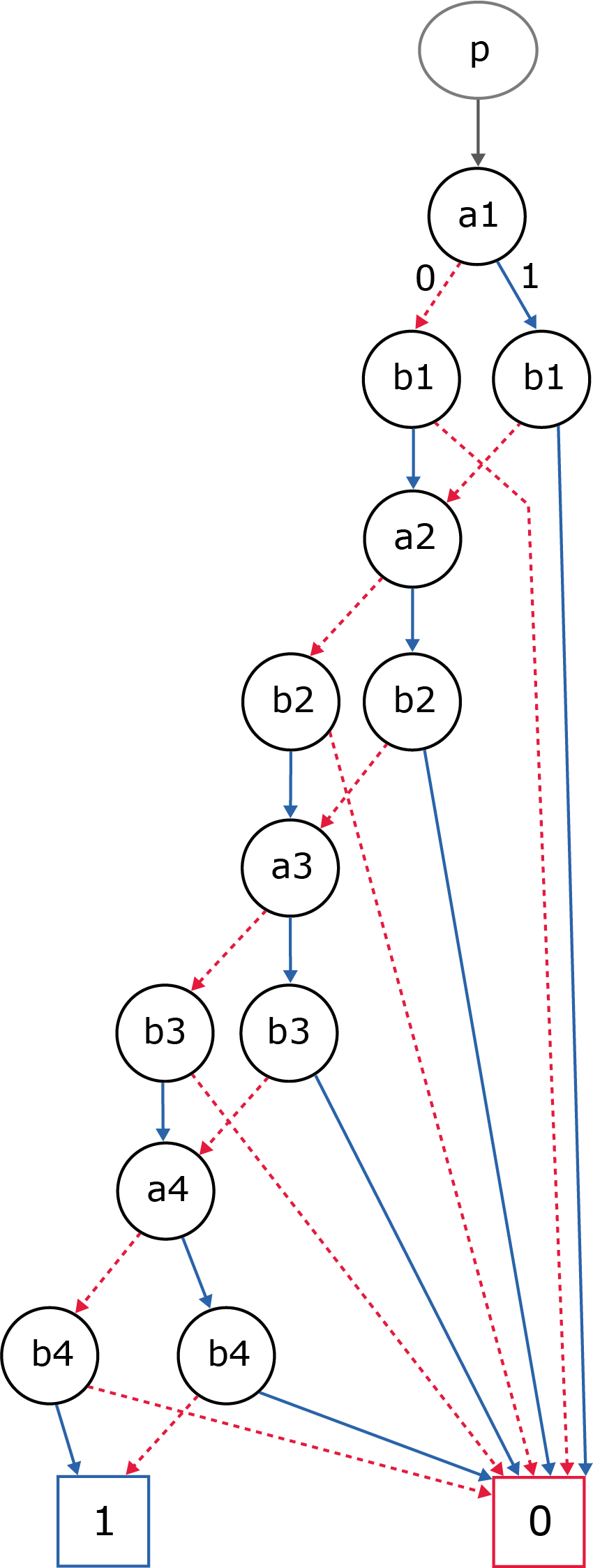}
\end{center}
\caption{BDD for $p$ function for 4 variables}\label{fi:bdd_p}       
\end{figure}
\begin{figure}[t]
\begin{center}
\includegraphics[scale=0.64]{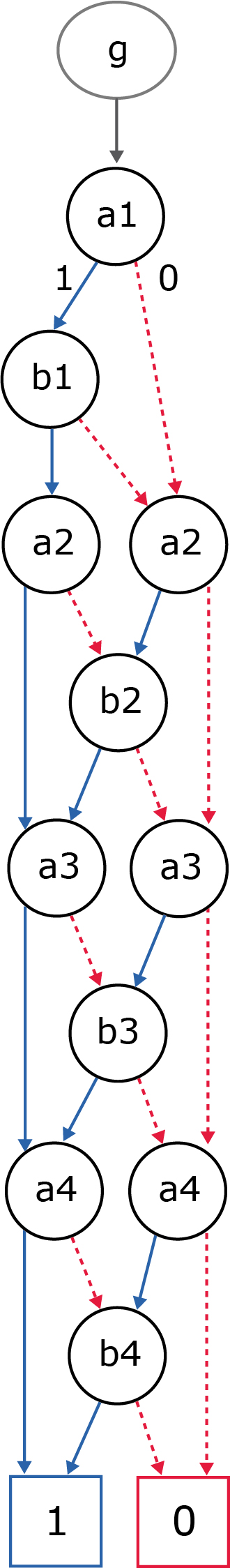}
\end{center}
\caption{BDD for $g$ function for 4 variables}\label{fi:bdd_g}       
\end{figure}
Based on this observation, the whole BDD for the CLA can be computed based on {\em ite}. 

\begin{theorem}
The BDD for the CLA can be constructed polynomially.  
\end{theorem}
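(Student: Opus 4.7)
The plan is to combine the preceding lemma with the general observation stated in the remark: if every internal signal has a BDD of size polynomial in $n$ and the circuit contains only polynomially many gates, then symbolic simulation produces the output BDDs in polynomial time, because each of the polynomially many \emph{ite}-calls along the topological traversal has polynomial cost $O(|F|\cdot|G|\cdot|H|)$.

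First, I would account for the gates. The CLA decomposes into (i)~an initial layer computing $p_{i,i}=a_i\oplus b_i$ and $g_{i,i}=a_i\cdot b_i$ for every bit position (linear in $n$), (ii)~the prefix block $P_n$ that combines $(p,g)$ pairs via $p_{j,i}=p_{k,i}\cdot p_{j,k+1}$ and $g_{j,i}=g_{j,k+1}+(g_{k,i}\cdot p_{j,k+1})$ in logarithmic depth with at most $O(n\log n)$ gates, (iii)~a final layer producing $c_i = g_{i,0}+p_{i,0}\cdot c_{-1}$, and (iv)~an XOR layer $s_i = a_i\oplus b_i \oplus c_{i-1}$ according to Equation~(\ref{eq:sum}). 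The total gate count is polynomial in~$n$.

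Second, I would bound the BDD size of every internal signal. Every signal produced inside $P_n$ is, by construction, some $p_{j,i}$ or $g_{j,i}$, and by the preceding lemma their BDDs have sizes $3(j-i+1)$ and $3(j-i)+2$ respectively, hence $O(n)$. The carry signals $c_i$ and the sum signals $s_i$ are bounded by $3i+6$ and $3i+7$ by Theorem~\ref{th:adder_bdd}. Any temporary BDD arising from an intermediate sub-expression, such as $p_{i,0}\cdot c_{-1}$ or $g_{k,i}\cdot p_{j,k+1}$, is the result of an \emph{ite}-call on operands of size $O(n)$, so the worst-case bound on \emph{ite} guarantees polynomial size for these as well.

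Third, the two ingredients combine directly: performing symbolic simulation in topological order triggers one \emph{ite}-call per gate, each call operates on BDDs of size polynomial in $n$, and the number of calls is polynomial. This yields the desired polynomial bound on the full construction. The main obstacle to formalise cleanly is the bookkeeping inside $P_n$: one has to fix a concrete recursive realisation of the prefix tree (e.g.~always splitting at $k=j-1$ as in the lemma, or a balanced binary tree) and verify that every node in that tree indeed corresponds to some $p_{j,i}$ or $g_{j,i}$ covered by the lemma, so that the linear-size bound really applies to all wires and not just to the externally visible $p_{i,0}$ and $g_{i,0}$. Once this correspondence is pinned down, the theorem follows by a routine application of the remark.
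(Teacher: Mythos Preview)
Your proposal is correct and follows essentially the same route as the paper: the paper's argument for this theorem is just the sentence ``Based on this observation, the whole BDD for the CLA can be computed based on \emph{ite}'', relying on the preceding lemma for the $p_{j,i}$ and $g_{j,i}$ signals together with the earlier remark about polynomial symbolic simulation. Your write-up simply makes that sketch explicit by separately accounting for the gate count, the internal-signal BDD sizes, and the cost of each \emph{ite}-call, which is exactly what the paper leaves implicit.
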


\section{Conclusion}\label{sec:concl}

In this paper it has been proven for three different adder architectures that the complete formal verification process can be carried out polynomially. It was proven that the underlying BDDs remain polynomial during the whole construction process. This was ensured by proving upper bounds on the BDD sizes for each internal signal. While the BDD sizes for the outputs of the adder functions were known to be polynomially bounded, this is the first time that for efficient adder circuits of logarithmic run time a polynomial proof process could be ensured. 

It is focus of future work to identify further classes of circuits and functions that can be polynomially verified using BDDs. Furthermore, alternative proof engines on the Boolean level, like SAT or O(K)FDDs, can be considered. Also extension to the word-level, like SMT or WLDDs, will be studied.  

\section*{Acknowledment}
Parts of this work have been supported by DFG within the Reinhart Koselleck Project {\em PolyVer: Polynomial Verification  of Electronic Circuits} (DR 287/36-1). Furthermore, the author likes to thank Alireza Mahzoon for helpful comments and discussions. 
\bibliographystyle{IEEEtran}
\bibliography{lit_bank,fey_loc,grosse_loc,lit_bank_ext}

\end{document}